\newcolumntype{+}{!{\vrule width 2pt}}
\newlength\savedwidth
\newcommand{\nor}[1]{\left\lVert {#1} \right\rVert}
\newcommand{\scal}[2]{\left\langle{#1},{#2}\right\rangle}
\DeclareMathOperator*{\argmin}{\arg\!\min}
\newcommand{\E}{\mathbb{E}}
\newcommand{\R}{\mathbb{R}}
\newtheorem{theorem}{Theorem}
\newtheorem{definition*}{Definition}
\newtheorem{remark}{Remark}
\newtheorem{proof}{Proof}
\begin{document}
\vspace*{0.3in}

\begin{flushleft}
{\Large
\textbf\newline{A computational model for grid maps in neural populations} 
}
\newline
\\
Fabio Anselmi\textsuperscript{1,2,*},
Micah M. Murray\textsuperscript{3-6},
Benedetta Franceschiello\textsuperscript{3,4}
\\
\bigskip
\textbf{1} Center for Brains, Minds, and Machines, MIT, Cambridge, MA, USA
\\
\textbf{2} Laboratory for Computational and Statistical Learning (LCSL) and IIT, Genova, Italy \\
\textbf{3} The LINE (Laboratory for Investigative Neurophysiology), Department of Radiology, University Hospital Center and University of Lausanne, Lausanne, Switzerland \\
\textbf{4} Ophthalmology Department, Fondation Asile des Aveugles and University of Lausanne, Lausanne, Switzerland \\
\textbf{5} EEG Brain Mapping Core, Center for Biomedical Imaging (CIBM), Lausanne, Switzerland \\
\textbf{6} Department of Hearing and Speech Sciences, Vanderbilt University, Nashville, TN, USA \\
\bigskip
* anselmi@mit.edu
\end{flushleft}

\section*{Abstract}
Grid cells in the entorhinal cortex, together with head direction, place, speed and border cells, are major contributors to the organization of spatial representations in the brain. In this work we introduce a novel theoretical and algorithmic framework able to explain the emergence of hexagonal grid-like response patterns from head direction cells' responses. We show that this pattern is a result of minimal variance encoding of neurons. The novelty lies into the formulation of the encoding problem through the modern Frame Theory language, specifically that of equiangular Frames, providing new insights about the optimality of hexagonal grid receptive fields. The model proposed overcomes some crucial limitations of the current attractor and oscillatory models. It is based on the well-accepted and tested hypothesis of Hebbian learning, providing a simplified cortical-based framework that does not require the presence of theta velocity-driven oscillations (oscillatory model) or translational symmetries in the synaptic connections (attractor model). We moreover demonstrate that the proposed encoding mechanism naturally explains axis alignment of neighbor grid cells and maps shifts, rotations and scaling of the stimuli onto the shape of grid cells' receptive fields, giving a straightforward explanation of the experimental evidence of grid cells remapping under transformations of environmental cues. 
 


 
\section*{Author Summary}

Since their discovery in 2005 grid cells have played a key-role in understanding how different species' brains dynamically represent an animal's position in space. 
Despite more then a decade of interest from a large number of investigators, a universally accepted model of how grid cells receptive fields emerge is still lacking.
In this study we provide a new and simple theoretical and computational framework to explain how grid cells could possibly emerge from the firing activity of  head direction cells. 
We propose a novel formulation of the encoding problem through the modern Frame Theory language, providing new insights about the optimality of hexagonal grid receptive fields and  overcoming some crucial limitations of the current attractor and oscillatory models.
Moreover, we demonstrate that this same encoding strategy can generalize from spatial to more abstract information.


\section*{Introduction}

Grid cells in the entorhinal cortex efficiently represent an animal's spatial position using an hexagonal symmetric code \cite{Moser2005,Burak2009}. Mathematical models have been developed to explain the emergence of such surprisingly regular firing activity~\cite{McNaughton2006,Fuhs2006,Burgess2007,Hasselmo2007,Blair2007,Kropff2008}. However, the problem is far from being solved, and many questions remain open \cite{Renart2003,Yartsev2011,Schmidt2013,Heys2013}. From the modelling point of view, two main mechanisms have been proposed to generate the hexagonal periodic activity: oscillatory interference \cite{Burgess2007,Orchard2013} and continuous attractor dynamics \cite{McNaughton2006,Fuhs2006}. First, we address briefly the main ideas underlying these models. 

In oscillatory models, grid cells' patterns emerge from the interference between theta oscillations of velocity-modulated cells \cite{Burgess2007,Hasselmo2007}. Experimental results in \cite{Burgess2012} have identified a class of cells, named band cells, that fire at specific spatial periodicity; the interference of three cells of this kind, whose wave vectors' orientations differ by multiples of $120$ degrees, leads to hexagonal grid-type interference patterns. 
The key assumptions of oscillatory models have been experimentally challenged. Theta oscillations have not been observed in fruit bats \cite{Yartsev2011} or macaque monkeys \cite{Killian2012}, despite
robust grid cell activity having been recorded in both species.
Furthermore, whole-cell recordings from head-fixed mice moving at different velocities, showed minimal correlation between the mouse's velocity and the amplitude or periodicity of theta oscillations \cite{Domnisoru2013,Schmidt2013}.

The core idea of continuous attractor models explains the regularity of the grid firing activity as an attractor state generated by symmetrical recurrent interactions between grid cells \cite{McNaughton2006,Fuhs2006}. A major weakness of this class of models is that it requires an unrealistically high degree of translational symmetry in the strength of the connections among neurons: neurons at equal distance should connect with equal strength. However, real neuronal populations are affected by noise and randomness and therefore break this symmetry and the grid regularity \cite{Renart2003}. Alternative models based on single-cell firing, adaptation, slowly varying spatial inputs, or, more recently, on deep reinforcement learning have been proposed in \cite{Kropff2008,wiskott2007,banino2018, Ger2017}.\\ 
The model we propose has a number of advantages with respect to those mentioned above. For clarity we list the novel contributions of our work:
\begin{itemize}

\item The model is based on the well-accepted and tested hypothesis of Hebbian learning, \cite{Hebb1949}, and none  of the complications of the interference and attractors models are needed, like the presence of theta-velocity driven oscillations or translational symmetries in the synaptic connections. 

\item We explain how grid cells could emerge from the head direction population firing activity, giving a  novel principled derivation of the hexagonal grid shape using signal analysis, in particular frame theory.

\item  We explain the experimental phenomenon of neighbor grid cells axes alignment. 
 
\item We show how shifted, rotated and scaled grid cells' receptive fields naturally remap, given transformed visual landmarks~\cite{Moser2005,Fyhn2007}.

\item 
We sketch a theoretical framework for the otherwise puzzling experimental findings in~\cite{Costantinescu2016} where the authors show how grid cells may play a role in the organization of "conceptual" spaces. 

\end{itemize}

\section*{Results}
\subsection*{Model description and predictions}


The model is based on three assumptions. By analogy with the Hubel and Wiesel simple-complex cells computation in the primary visual cortex ~\cite{Hubel1965,Hubel1968}, we propose grid cells to emerge from a linear sum of "simple cells", that we will identify with head direction cells, whose receptive fields are learned from a collection of neuronal inputs with \emph{stationary} second-order stimulus statistics ($\mathbf{H}1$). 
In other words, we assume that the encoding of the objects' movements at the level of the entorhinal cortex (the upstream neuronal responses) obeys a statistic that does not differ significantly from that of natural images (which is approximately stationary \cite{field1999}). Deep connections between visual recognition task and enthorinal cortex has been suggested in \cite{bica2019}. 
We also assume that each neuron computes a response that is the scalar product between the input and its synaptic weights i.e.
\begin{equation}
r_i(\mathbf{x})=\scal{\mathbf{x}}{\mathbf{w}_i}
\end{equation} 
with $\mathbf{x}$ the input image function and $\mathbf{w}_i$ the synaptic weights function of neuron $i$.
In the following, we will fix $\mathbf{x}$ and omit to write the dependence on $\mathbf{x}$. \\
Furthermore, we assume that the synaptic weights are updated following Oja's rule, derived as a the first order expansion of a normalized Hebbian rule, \cite{oja1992principal}, ($\mathbf{H}2$). The normalization assumption is plausible, because normalization mechanisms are widespread in the brain \cite{carandini2018}. 
The original paper of Oja \cite{oja1982simplified} showed that the weights of a neuron updated according to this rule will converge
to the top principal component (PC) of the neuron's past input, i.e. to an eigenfunction of the input's covariance. Plausible modifications of the rule, involving added noise or inhibitory connections with similar neurons, yield additional eigenfunctions \cite{oja1992principal}. Thus, this generalized Oja's rule can be regarded as an online algorithm to compute the principal components of incoming streams of input; in our case, the stationary neuronal responses of simple cells.
Our last and most important assumption is that the neural population's goal is to encode a variation of its input, in this case the position, with maximal precision \cite{Deneve1999} $(\mathbf{H}3)$: neuronal responses are noisy, and thus repeated, equal stimuli can produce different outputs. $(\mathbf{H}3)$ tells us that the population coding aims to minimize the variance of the responses.\\
\noindent
The first important consequence of hypotheses $(\mathbf{H}1,\mathbf{H}2)$ is that the synaptic weights of the neuronal population are tuned to Fourier functions i.e. 
\begin{equation}\label{Fourierc}
w(\mathbf{k},\mathbf{\xi})=e^{I\mathbf{k}^{T}\mathbf{\xi}},\;\mathbf{k},\mathbf{\xi}\in\R^2
\end{equation}
where $I$ is the imaginary number and $\mathbf{k}$ is the two-dimensional frequency vector.
This follows from the stationarity of neuronal input i.e. the fact that the associated covariance matrix is diagonalized by Fourier functions. A consequence of Oja's rule is that those are also the learned neuronal weights. The relative change of position of the objects in the scene (due to the animal navigating in the environment) is modelled in a first approximation as covariant translations at the level of the highly processed input of the enthorinal neurons i.e. :
\begin{equation}\label{transl}
T_\mathbf{y}\mathbf{x}(\mathbf{\xi})=\mathbf{x}(\mathbf{\xi}-\mathbf{y}),\mathbf{y}\in\R^{2};
\end{equation}
where $T_\mathbf{y}$ is the translation operator. The response of a $N$ neurons population encoding the position of stimulus $\mathbf{y}$ will be:
\begin{equation}
\mathbf{r}(\mathbf{y})=(r_1(\mathbf{y}),\cdots,r_N(\mathbf{y}))
\end{equation}
with $r_i(\mathbf{y})=\scal{T_{\mathbf{y}}\mathbf{x}}{e^{I\mathbf{k}^{T}_i\mathbf{\xi}}}$.
Upon a change in the observer head direction of an angle $\theta$, implemented by the rotation matrix $R_{\theta}$ , the simple cells response changes as:
$$
r_i(\mathbf{y},\theta)=\scal{R_{\theta}T_{\mathbf{y}}\mathbf{x}}{e^{I\mathbf{k}^{T}_i\mathbf{\xi}}}=e^{I\mathbf{k}^{T}_i\mathbf{y}}[R_{\theta}c(x)]_{i}
$$
where $c_i(\mathbf{x})$ are the Fourier coefficients of $\mathbf{x}$ with respect to the frequencies $\mathbf{k}_i$ and $e^{I\mathbf{k}^T_i\mathbf{y}}$ are the phase factors due to the translation covariance of the Fourier transform.
Interestingly, the information about the rotation angle  $\theta$ and the translation $\mathbf{y}$ is contained into two different parts of the Fourier transform, respectively a phase factor (for the translation) and the rotation of the untransformed Fourier coefficients. 
The previous observations allow us to perform an analogy between simple cells in our model and HDs in the hippocampus.
Our choice is supported the by anatomical and experimental evidence that HDs directly input grid cells and disruption or disturbance of HDs signal destroy grid cells hexagonal firing activity \cite{Winter2015}.
\\
In the following we will focus on the emergence of grid cells' receptive fields, in particular on how they can be derived from optimality of the position information contained in the  phase factors of the HDs cells' response.\\
The simplest model of a \emph{"complex" grid cell} aggregates the responses of simple head direction cells by summation:
\begin{equation}
r(\mathbf{y})=\sum_{i=1}^{N} r_i(\mathbf{y}) = \scal{T_{\mathbf{y}}\mathbf{x}}{\sum_{i=1}^{N}\mathbf{w}_i}=\sum_{i=1}^{N} c_i(\mathbf{x})e^{I\mathbf{k}^T_i\mathbf{y}} \label{response}
\end{equation}
 The phases, as in \cite{Orchard2013}, encode the information about the observer position.

In general, each single simple cell's response can be considered as a random variable subject to Gaussian noise:
\begin{equation}\label{noise}
r_i(\mathbf{y})\rightarrow r_i(\mathbf{y})+\sigma_i.
\end{equation}
In the following, we suppose the noise at each simple cell to be the same and uncorrelated, i.e.:
$(1/\sigma^2)\mathbf{I}$. 
Assuming $(\mathbf{H}2)$ and $(\mathbf{H}3)$, the question that follows is how many neurons and which set of frequencies $\{\mathbf{k}_i\}$ are best to encode the animal's position $\mathbf{y}$ with maximal precision.\\ 
A lower bound on any possible unbiased estimator of the random variable $\mathbf{y}$ is given by the Cramer-Rao bound (\cite{Kay1993}, see Materials and Methods \ref{app:secFisher} for more details). The bound reads:
\begin{equation}\label{CR}
\nor{Cov(\mathbf{y})}\geq \nor{F^{-1}(\mathbf{y})}
\end{equation}
where $\mathbf{F}$ is the so-called \emph{Fisher information} matrix, $Cov$ is the covariance matrix of the neuronal responses, and $\nor{\cdot}$ is a matrix norm.
Intuitively, $\mathbf{F}$ measures the amount of information that the encoding population carries about the random variable $\mathbf{y}$. 
The main theoretical result of the paper is that the lower bound for the right hand side of Eq. \ref{CR} is achieved when the set of frequency vectors $\{\mathbf{k}_i\}$ form a so-called \emph{Equiangular frame}. An equiangular frame is a set of vectors such that each pair of vectors form the same angle between them; i.e. 
\begin{equation}
\mathbf{k}^T_p \mathbf{k}_q=\cos(\alpha),\;\forall\;p,q; \;\alpha\in [0,2\pi], \alpha \mbox{ constant}. 
\end{equation}
We proved that the lower bound is achieved in dimension two when the number of neurons is $N=2$, yielding the Orthonormal frame. For $N=3$ we obtain the so-called Mercedes-Benz frame, composed by vectors whose orientations differ by $60^\circ$ degrees (see Fig. \ref{fig:frames}).
\begin{figure}[H]
    \centering
    \includegraphics[width = 0.7\textwidth]{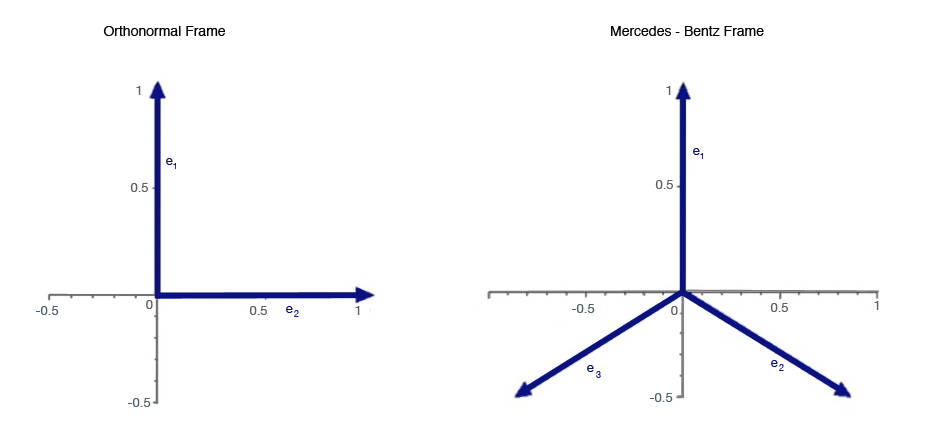}
    \caption{Orthonormal (left) and Mercedes-Benz frame (right) in dimension two. Note how the vectors in the Mercedes-Benz frame are separated by $120^{\circ}$ one from the other.}
    \label{fig:frames}
\end{figure}
Clearly it is biologically implausible that grid cells receives input from so few HDs. The activity of one simple cell unit in our model summarizes that of a whole population of cells with the same preferential orientation $\mathbf{k_{i}}$: summing over multiple simple cells responses sensitive to the same orientation keeps the value of eq. \eqref{response} unchanged up to an overall constant factor.\\  
\noindent
More formally:
\begin{theorem}\label{thmain}
Given the hypotheses $H(1,2,3)$ the minimum variance position encoded by a set of $N$ neurons corrupted by Gaussian uncorrelated noise is achieved when $N=3$ and the set of frequency vectors is:
$$
f=\{\mathbf{k}_1,\mathbf{k}_2,\mathbf{k}_3\}=\{(\cos(2\pi j/3),\sin(2\pi j/3)),j=1,\cdots,3\}
$$
or for $N=2$ the orthonormal frame.
\end{theorem}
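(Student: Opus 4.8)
The plan is to reduce the statement to a two–dimensional eigenvalue problem for the Fisher information matrix and then to close it with the classification of equiangular tight frames in the plane. \textbf{Step 1: the Fisher information is a weighted frame operator.} By $(\mathbf{H}1,\mathbf{H}2)$ the simple-cell responses are $r_i(\mathbf{y})=c_i(\mathbf{x})\,e^{I\mathbf{k}_i^{T}\mathbf{y}}$, hence $\partial r_i/\partial y_a = I\,k_{i,a}\,c_i(\mathbf{x})\,e^{I\mathbf{k}_i^{T}\mathbf{y}}$ and the pairwise products $\mathrm{Re}\!\big(\overline{\partial_a r_i}\,\partial_b r_i\big)=|c_i(\mathbf{x})|^{2}\,k_{i,a}k_{i,b}$ are independent of $\mathbf{y}$. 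For the Gaussian model with the uncorrelated noise of \eqref{noise}, the Fisher information matrix therefore is, up to the fixed noise factor,
\[
\mathbf{F}\;=\;\sum_{i=1}^{N}|c_i(\mathbf{x})|^{2}\,\mathbf{k}_i\mathbf{k}_i^{T}\;=\;\sum_{i=1}^{N}w_i\,\mathbf{k}_i\mathbf{k}_i^{T},\qquad w_i:=|c_i(\mathbf{x})|^{2}>0,
\]
i.e.\ a multiple of the frame operator of the weighted system $\{\sqrt{w_i}\,\mathbf{k}_i\}$, and it is invariant under $\mathbf{y}$. By $(\mathbf{H}1)$ the stimulus statistics may be taken isotropic, so all $w_i$ and all $\|\mathbf{k}_i\|$ are equal; and since the total input power $\sum_i w_i\|\mathbf{k}_i\|^{2}$ is bounded through Parseval by $\|\mathbf{x}\|^{2}$ times a fixed scale factor \emph{independently of $N$} — adding several simple cells with a common orientation only rescales \eqref{response} — the admissible matrices $\mathbf{F}$ are precisely the $2\times 2$ positive matrices with a prescribed maximal trace $T=\mathrm{tr}\,\mathbf{F}$.

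\textbf{Step 2: the Cramér--Rao bound is extremised by a tight frame.} By \eqref{CR}, $\nor{Cov(\mathbf{y})}\ge\nor{\mathbf{F}^{-1}}$, so the minimum-variance configurations minimise $\nor{\mathbf{F}^{-1}}$ over $2\times 2$ positive matrices of trace $T$. For the operator norm $\nor{\mathbf{F}^{-1}}=\lambda_{\min}(\mathbf{F})^{-1}$, and since $\lambda_{\min}(\mathbf{F})+\lambda_{\max}(\mathbf{F})=T$ one has $\lambda_{\min}(\mathbf{F})\le T/2$ with equality iff $\mathbf{F}=(T/2)\mathbf{I}$; the Frobenius and trace norms give the same minimiser. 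Hence $\nor{\mathbf{F}^{-1}}$ is minimal exactly when $\sum_i w_i\,\mathbf{k}_i\mathbf{k}_i^{T}\propto\mathbf{I}$, that is, when $\{\sqrt{w_i}\,\mathbf{k}_i\}$ is a \emph{tight frame} of $\R^{2}$. A one-line $2\times 2$ check confirms that the two configurations in the statement are tight: the orthonormal pair gives $\sum_i\mathbf{k}_i\mathbf{k}_i^{T}=\mathbf{I}$, and $\mathbf{k}_j=(\cos(2\pi j/3),\sin(2\pi j/3))$, $j=1,2,3$, gives $\sum_j\mathbf{k}_j\mathbf{k}_j^{T}=\tfrac32\mathbf{I}$, since $\sum_j\cos^{2}(2\pi j/3)=\sum_j\sin^{2}(2\pi j/3)=\tfrac32$ and $\sum_j\cos(2\pi j/3)\sin(2\pi j/3)=0$.

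\textbf{Step 3: from tightness to $N\in\{2,3\}$.} Tightness alone is shared by infinitely many systems with $N\ge 3$, so the decisive extra ingredient is the equiangularity already stressed in the text: demanding in addition equal norms and balanced pairwise correlations between simple cells — a constant off-diagonal Gram entry $\mathbf{k}_p^{T}\mathbf{k}_q=\cos\alpha$ for $p\ne q$ — forces $\{\mathbf{k}_i\}$ to be an equiangular tight frame. In $\R^{2}$ the absolute (Gerzon) bound limits any equiangular family of lines to at most $d(d+1)/2=3$ lines, and the only equiangular tight frames are $N=2$ (the orthonormal pair, $\alpha=\pi/2$) and $N=3$ (the Mercedes--Benz frame, $\cos\alpha=\cos(2\pi/3)=-\tfrac12$); together with Step 2 these are exactly the minimum-variance configurations. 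Since $N=2$ produces a square, non-hexagonal sum in \eqref{response} while $N=3$ is the smallest number of orientations yielding a genuine planar hexagonal lattice, the theorem follows.

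\textbf{Main obstacle.} The reduction in Steps 1--2 — the form of $\mathbf{F}$, the Parseval trace normalisation, the eigenvalue inequality — is routine. The genuinely delicate point is Step 3: tightness already gives $\mathbf{F}\propto\mathbf{I}$ for a whole continuum of configurations, so one must argue carefully \emph{which} secondary criterion (minimal mutual coherence at fixed tightness, robustness of the population estimate to loss of orientations, or simply maximal symmetry of the simple-cell layer) singles out equiangularity, and only then invoke the $\R^{2}$ classification of equiangular tight frames to land on $N\in\{2,3\}$.
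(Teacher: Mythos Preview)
Your Steps 1--2 land on the same conclusion as the paper --- the Cramér--Rao bound is minimised exactly when $\mathbf F$ is a multiple of the identity, i.e.\ when the frequency vectors form a tight frame --- but the two arguments diverge in the normalisation and in how $N$ is selected.

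\textbf{Normalisation of the trace.} The paper does not bound $\operatorname{tr}\mathbf F$ via Parseval. It uses the $1/f^{2}$ power spectrum of natural images (a consequence of $(\mathbf H1)$) to replace $\E(|c_{i}(\mathbf x)|^{2})$ by $\|\mathbf k_{i}\|^{-2}$, so that $\mathbf F=\frac{1}{\sigma^{2}}\sum_{i}\mathbf g_{i}\mathbf g_{i}^{T}$ with unit vectors $\mathbf g_{i}=\mathbf k_{i}/\|\mathbf k_{i}\|$ and hence $\operatorname{tr}\mathbf F=N/\sigma^{2}$ for each fixed $N$. The eigenvalue minimisation is then the standard convexity argument at fixed trace. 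Your Parseval move is different and, as written, not quite sound: Bessel/Parseval controls $\sum_{i}|c_{i}|^{2}$, not $\sum_{i}|c_{i}|^{2}\|\mathbf k_{i}\|^{2}$, which is the trace of $\mathbf F$. If you first fix a common $\|\mathbf k_{i}\|$ via isotropy, the two quantities agree up to a constant, but then $\operatorname{tr}\mathbf F$ again scales with $N$ and you have not obtained the $N$-independent bound you claim.

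\textbf{Why $N\in\{2,3\}$.} Here you and the paper genuinely part company. The paper's proof does \emph{not} derive $N=2,3$ from the optimisation: it takes $N=2$ and $N=3$ as given and quotes Goyal (2001) for the classification of unit-norm tight frames in $\R^{2}$ at those cardinalities --- orthonormal pair and Mercedes--Benz, each unique up to rotation. (The ``$N$ minimal'' clause surfaces only later, in the algorithmic section, as an $\ell_{1}$ sparsity penalty on the pooling vector.) Your Step 3 is more ambitious: you impose equiangularity on top of tightness and invoke the Gerzon bound $N\le d(d+1)/2=3$ to pin down $N$. That is a legitimate route to the same two frames and has the virtue of actually selecting $N$; but, as you correctly flag in your ``Main obstacle'', equiangularity is not contained in $H(1,2,3)$, so it enters as an extra design principle rather than a consequence of the stated hypotheses. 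In short: your core mechanism matches the paper's, but the paper's proof is shorter precisely because it does not attempt what you attempt in Step 3 --- it cites the tight-frame classification at $N=2,3$ and stops.
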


\begin{figure}[H]
\includegraphics[width=1\textwidth]{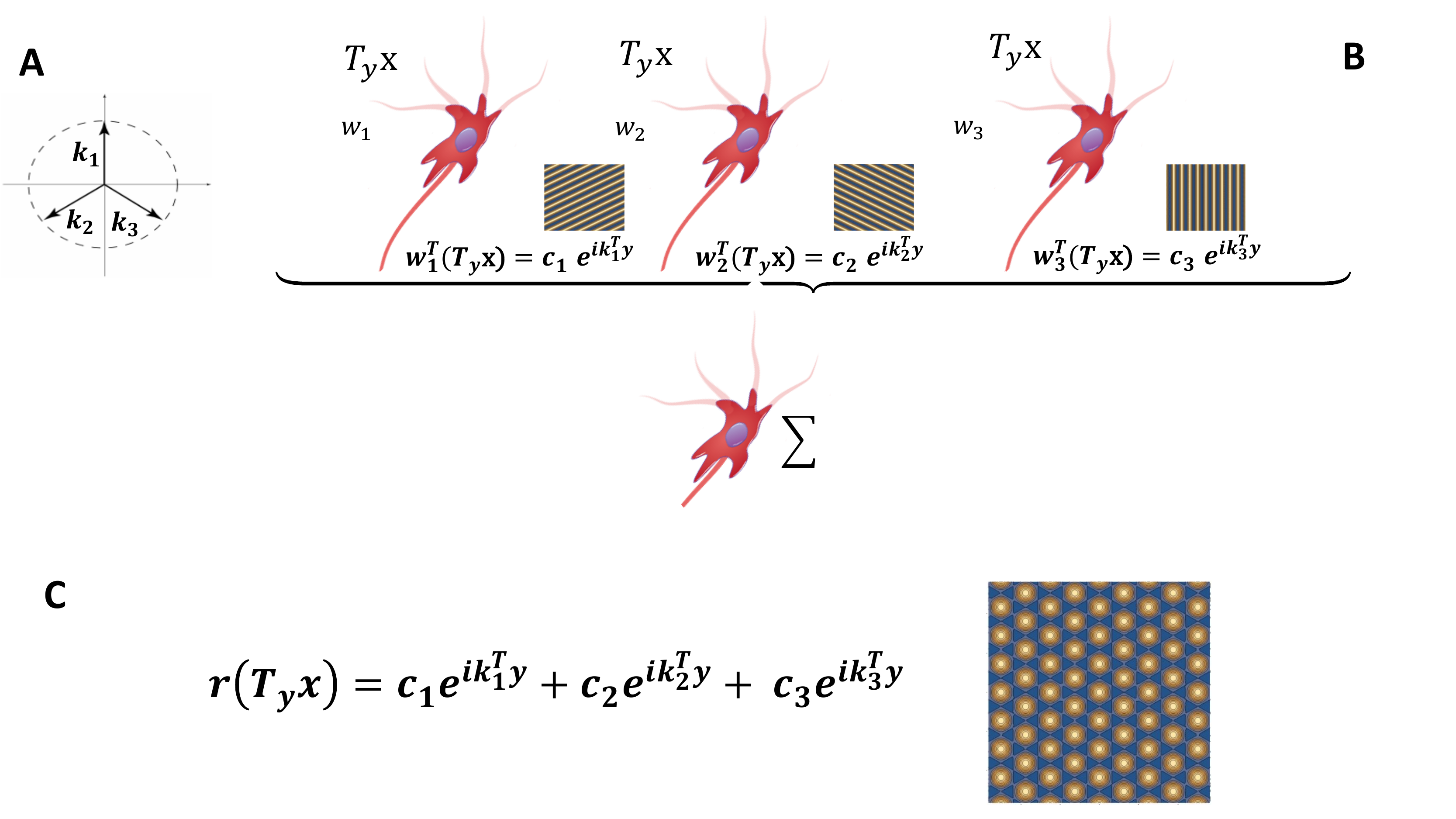}
\caption{The image shows how a grid-like cells pattern arises from the interference of planar waves responses. (A) The Mercedes-Benz frame is constituted by the equiangular vectors $k_1, k_2, k_3$, whose directions are along the angles $\theta = \pi/2, -\pi/6, -5\pi/6$. (B) Three neurons with input stimulus $x$, translated by a vector $y$, $T_{y}x$ and their receptive fields $w_1,w_2,w_3$ i.e. three planar waves in equiangular directions.  (C) linear sum of the three neurons' responses (indicated in B as $\Sigma$) resulting in the grid-like hexagonal pattern.} \label{figure_grid_cell}
\end{figure}

Bearing in mind the previous observation, our result states that the best encoding of position is achieved when the "complex" grid-like cell is aggregating the responses of two or three neurons (or similarly-tuned neural populations) whose neuronal weights are Fourier functions with \emph{equiangular frequencies} in the frequency space.
Suppose now the neurons' weights have been learned.
The response in eq. \eqref{response} produces output in terms of an interference pattern of two or three planar waves that is consistent with a square or hexagonal grid, respectively (see Figure \ref{figure_grid_cell}).
Importantly, grid responses are robust to noise: their stability follows from the stability to noise of the simple head direction cells receptive fields. This is guaranteed by the proven noise robustness of  Oja's online learning algorithm, \cite{Kar1994}.\\
The novelty of our contribution w.r.t. existing literature is summarised in the following remarks.\\
\begin{itemize}
\item Our model may resemble the interference model of \cite{Burgess2007}. However, it is important to point out one crucial difference: the oscillations interference pattern in our model are due to the shape of the learned receptive fields of the HDs and not to the theta oscillations in the hippocampal circuit.
Therefore, band oscillations in Figure \ref{grid_fig} are not the controversial ones described in \cite{kupric2014}.\\
\item In the same vein, although PCA is used in our model to derive the shape of the HDs' receptive fields, its role is completely different from the one in \cite{Dordek2016}, where PCA is used to derive the shape of the grid cells' responses. \\ 
\item Optimality of grid cells' hexagonal receptive field is here derived in a novel way w.r.t. e.g. in \cite{fiete2008,Mathis:2012,vago2018,caplan2018}. In fact, we used instruments formalised within the Frame theory and the interesting properties of the so-called Mercedes-Benz frame (see e.g. \cite{Framet}).\\
\end{itemize}
\subsection*{Two phase application of Oja's learning rule}
In this section we explain how hexagonal grid receptive fields emerge from a two-phase learning process.

\paragraph*{First phase: "simple" HD cells learning.} A collection of cyclical translations in the $(x,y)$ directions of natural images is used as input stimuli to compute the "simple" cell profile of activation. Next, the principal components of these activation profiles are extracted, diagonalizing the covariance matrix of the input data. The second order statistic of the input, i.e. the covariance matrix, is clearly  stationary.
Note that the stationarity is independent from the nature of the stimulus and it crucially depends on the more abstract notion of transformation (in this case translations).
Under the assumption of Oja's rule, this mechanism simulates the learning phase of a "simple" HD cell.\\
\noindent
An example of the learned receptive fields is shown in Fig \ref{grid_fig} A. As expected, they are Fourier components with different directions.
\paragraph*{Second phase: "complex" grid cell learning.} The second step entails aggregating the responses of simple HD cells. A collection of cyclical translations of a test image is used to calculate the aggregation vector $J$ according to the minimization problem (see Figure \ref{grid_fig} and Materials and Methods, \textit{Algorithmic formulation} for the algorithm details).

\begin{figure}[H]
\includegraphics[width=1\textwidth]{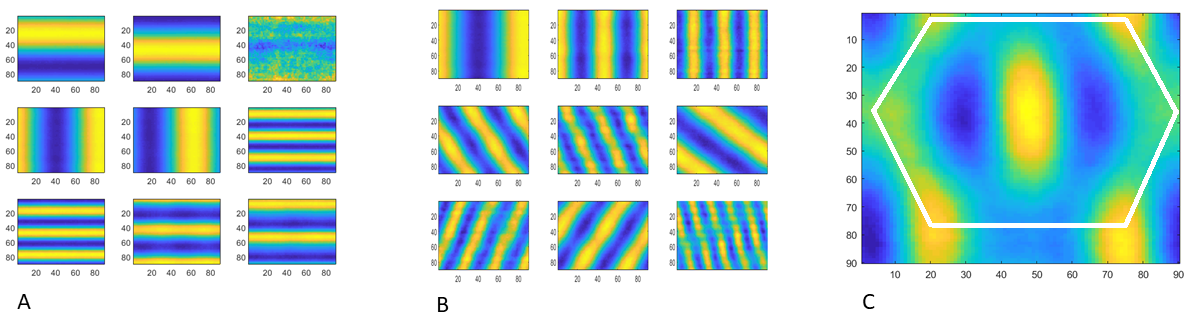}
\caption{(A) Simple head direction cells receptive fields, obtained through the first phase learning, from stationary stimuli. (B) Head direction receptive fields, selected through variance minimization of the estimated position. (C) Superposition of equiangular patterns selected from figure (B), for N=3.}
\label{grid_fig}
\end{figure}

The results displayed in Figure \ref{grid_fig} show that,
as a result of variance minimization of the position estimate, waves with $120^{\circ}$ angular distance are selected (B). Their superposition have a grid-like shape (C). Although the algorithm provides angular directions predicted by the theorem, not all superpositions produce grid-like patterns.
This is due to the different frequencies of the Fourier receptive fields.
For a fixed frequency, the selected receptive fields sum to produce a grid-like interference pattern. 

The mechanism underpinning the combination of receptive fields of the same frequency relies on the nature of principal component decomposition: the first eigen-component is an oscillating wave whose frequency depends on the strongest oscillating component in the stimuli (see \cite{AapoBook}, pg 120). Thus, cells with the same RF size will be tuned to the same frequency.  
Moreover, cells in the same spatial neighbourhood, receiving the same input, will be tuned to similar wave orientations vectors. This explains a salient aspect of grid cells phenomenological behavior: neighboring grids cells have aligned orientations of their axes (i.e. the same orientations of the hexagonal axes). 

\subsubsection*{Grid remapping by changing environmental cues}

Experimental evidence shows that changes in environmental cues are matched by a transformation in the animal grid cells responses \cite{Fyhn2007}. For example a rotation of the main visual cues in the environment results in a rotation of the grid cells orientation field.\\
This aspect can be readily explained by our model. For example, if an environmental cue is rotated by an angle $\theta$ the grid rotates accordingly, since
\begin{equation*}
r_i(\mathbf{R}_\theta \mathbf{x})=\scal{\mathbf{R}_{\theta}\mathbf{x}}{\mathbf{w}_i}=\scal{\mathbf{x}}{\mathbf{R}_{-\theta}\mathbf{w}_i}
\end{equation*}
where $\mathbf{x}$ is the input of the HDs. In this case the frequency vectors $\{\mathbf{k}_{i}\}$ will be all rotated by the opposite angle $R_{-\theta}\mathbf{k}_{i}$, with a resulting rotation of the hexagonal grid. Similarly for a scale transformation the frequency vectors  will be rescaled by the scaling factor.

\section*{Discussion}
We successfully showed how hexagonal receptive fields, resembling to those of grid cells, emerge naturally in the spatial encoding framework by requiring minimal variance (maximal precision) of the population encoding together with Oja's learning rule. 
The assumption of a 2-phase simple HD-complex cell type learning adapts properties typically found in the visual cortex to those characterizing the entorhinal cortex. We contend that similarity in the types of learning is plausible, given that the entorhinal cortex integrates visual information while also determining the relative position of the observer navigating the spatial environment.
Importantly, the presented model provide a theoretical framework capable of explaining the experimental evidence that grid cells encode an abstract notion of space, decoupled from the specificity of the sensory inputs. The notion indeed emerges from the mathematical group properties of the objects' transformations, rather than the objects themselves.
More generally, our model would indicate that grid-like coding should manifest whenever the statistics of the neuronal inputs is stationary.
Indeed, the model detailed here provides a mathematical framework able to mimic the emergence of grid-like patterns not only in a spatial encoding scheme (where the considered transformations of the space are translations), but also in a more conceptual encoding scheme (where the transformations are dilations, e.g. \cite{Costantinescu2016}).

\subsection*{``Conceptual" encoding schemes\label{main:sec3}}

The idea that grid-like cells could provide a model to understand "cognitive", in addition to sensory-related brain functions, is not new \cite{moser2013,moser2014}. However, it was not before the work in \cite{Costantinescu2016} that the first experimental evidence was provided. Interestingly, our findings can be applied to outline a theoretical framework for investigating a possible computational model of their experimental evidence.

The stimuli in \cite{Costantinescu2016} are described in a two-dimensional \textit{conceptual bird space}, where the position coordinates are the lengths of both the neck and legs of the bird. In \cite{Costantinescu2016} the authors show an hexagonal grid-like pattern, while testing conceptual associations with functional magnetic resonance imaging (fMRI). For simplicity, we model the input space by using the shear group in $2D$ (composed of transformations dilating an image in the $x,y$ directions). Instead of the ratio between the legs of the birds and their necks we can think about the ratio between the base and height of a rectangle that scale in the directions $x$ and $y$, respectively, according to the parameters $(l_1,l_2)=\mathbf{l}\in\R^2$ (see Figure \ref{fig:bird}, C). 
\begin{figure}[H]
    \centering
    \includegraphics[width = 0.7\textwidth]{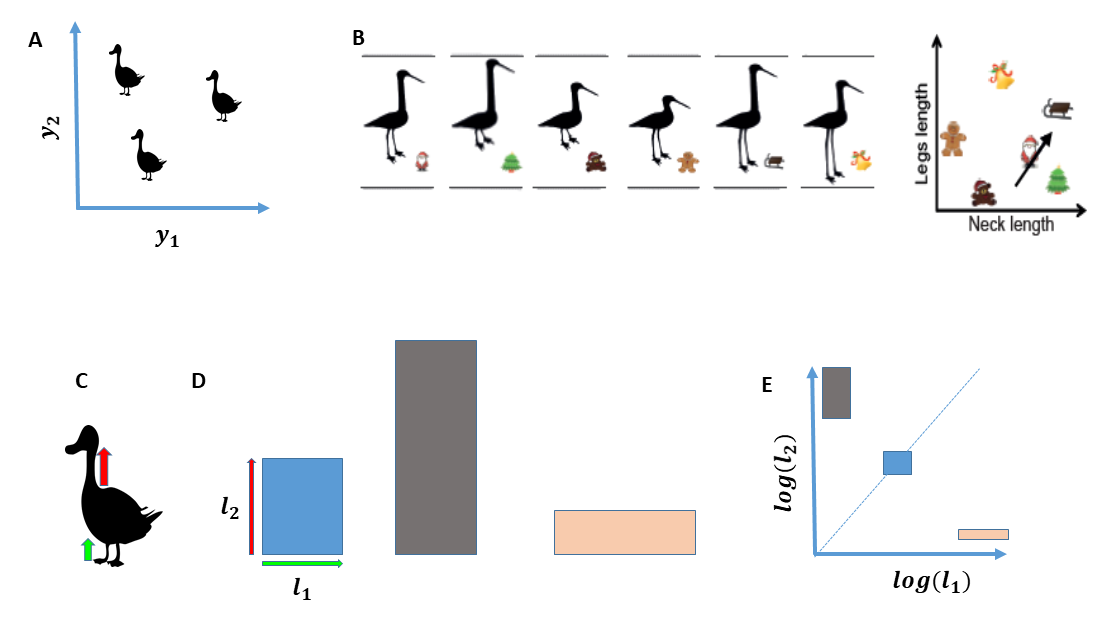}
    \caption{Translations of a bird in space (A). Transformations of a bird (B) and associated points in the `bird space''(from \cite{Costantinescu2016}). Transformations of a rectangle (D) that simplify the bird transformations (C).  The  associated points in the "rectangle space" (E).}
    \label{fig:bird}
\end{figure}
\noindent
The transformation corresponds to Eq. \eqref{transl} where instead of the translator operator the shear operator was used:
\begin{equation}\label{dil}
D_\mathbf{l}\mathbf{x}(\mathbf{\xi})=\mathbf{x}\bigg(\frac{\mathbf{\xi}}{\mathbf{l}}\bigg).
\end{equation}
Note that $D_\mathbf{l}$ indicates the shear operator.
The main idea is to apply our spatial encoding to assess whether the model allows to represent the grid-like conceptual patterns observed in \cite{Costantinescu2016}. 
We stress that similarly to the bird space (where the direction of motion in the abstract "bird space" was determined by the ratio between the neck and the leg lengths), the direction of motion (in our abstract "scale space") is determined by the ratio between the base and height of the rectangle.
It is simple to demonstrate that also in this case the second order statistic of the input is stationary since it again depends only on the nature of the transformation. Therefore, the synaptic weights of the neuronal population are tuned to the eigenfunctions of the shear operator as they were before in the translation case. These eigenfunctions are a generalization of Fourier components to the shear group and have the same form, but in the log-scale coordinates $\log(\mathbf{l})=(log(l_1),log(l_2))$, where $l_1,l_2$ are the scaling factors in the $x$ and $y$ directions (see e.g. \cite{eagleson1992}) :
\begin{equation}
s(\mathbf{l},\mathbf{k})=e^{I\mathbf{k}^{T}\log(\mathbf{l})}.
\end{equation}
The key observation is that in this new coordinates frame (provided by the response of the ''simple cells''), the shear transformations reduce to translations in the $\log(\mathbf{l})$-space since:
\begin{equation}
\log(\mathbf{l}\mathbf{l'})= \log(\mathbf{l})+\log(\mathbf{l'}).
\end{equation}
In this space the eigenfunctions are planar waves as in Figure \ref{figure_grid_cell} A, applying Theorem \ref{thmain}. We can then prove that also in this case the set of frequency vectors $\{\mathbf{k}_i\}$ is the Mercedes-Benz frame or the orthonormal frame. This will produce a square or hexagonal grid in the \textit{shear space}, where the coordinates, instead of the spatial $\mathbf{x,y}$ are the scale coordinates $\mathbf{l}=(l_1,l_2)$ as in Figure \ref{fig:bird}, (E).\\

The results in Theorem \ref{thmain} can be generalized to any Abelian group; the eigenfunctions of the group transformations are the group characters \cite{eagleson1992}. It can be used to predict grid-like cell geometries in higher dimensions. In dimension three, a possible solution corresponds to the vectors associated to the vertices of a tetrahedron. More generally, in dimension $d$ we will end up with the vectors associated to the vertices of a Platonic Solid. However, it should be noted that many other solution configurations might exist that are distinct from the case of $d=2$ analyzed in this paper.


\subsection*{Conclusion and Outlook}

We detailed a computational model able to account for the emergence of hexagonal grid-like response patterns that derives from head direction cells responses and neural sensitivity to the statistics of the input stimuli (i.e. minimal variance encoding). By applying Frame Theory, we provided a novel formulation of the encoding problem within the framework of equiangular frames. Our model overcomes some limitations described for both oscillatory and attractor models. Moreover, we were able to demonstrate that grid-like receptive field patterns persist despite transformations of the environmental cues as well as when more "conceptual" features are considered as input stimuli. Further work will be required to extend our findings to reproduce the experimental evidence showing that the regular pattern of the grid receptive field adapts to different geometries of the environment, distorting its hexagonal regularity \cite{kupric2014, Treves2015,kei2019}. Our main result in Theorem \ref{thmain} predicts the same hexagonal grid for the 3D space of rotations of an object, leading to a series of experiments in the same spirit of \cite{Cheng2018,Kim2019,Jacobs2013} (for spatial encoding) and of \cite{Costantinescu2016} for conceptual encoding possibly tested using magnetoencephalography \cite{stau2019}.


\section*{Materials and Methods}

\subsection*{Fisher information}\label{app:secFisher}

The Cramer-Rao bound (CRB) sets a lower bound on the norm of the covariance operator of any random variable unbiased estimator.
It says:
\begin{equation*}
\nor{Cov(\mathbf{y})}\leq \nor{F^{-1}(\mathbf{y})}
\end{equation*}
where $\mathbf{y}$ is our position variable, $F$ is the Fisher information defined as:
\begin{equation*}
(F(\mathbf{y}))_{k,l}=-\E\Big(\frac{\partial^2 \log L(\mathbf{y})}{\partial y_k \partial y_l}\Big)
\end{equation*}
and $L(\mathbf{y})$ is the likelihood function and the average is over of the measurements of $\mathbf{y}$. 
In our case the likelihood function is, \cite{Yoon1998}:
\begin{equation}\label{LH}
L(\mathbf{y}) = \mathcal{N}exp\Big(-\frac{1}{2}\sum_{i,j=1}^{N} (y_i-r_i(\mathbf{y}))C^{-1}_{ij}(y_j-r_j(\mathbf{y}))\Big).
\end{equation}
where $\mathcal{N}$ is a normalization constant, $C$ is the correlation matrix of the Gaussian noise and $r_i$ is the response of the $i^{th}$ neuron. Under the hypothesis of uncorrelated equal noise, i.e. $\mathbf{C}=(1/\sigma^2)\mathbf{I}$, a direct calculation of Eq.~\eqref{LH} gives:
\begin{eqnarray*}
\mathbf{F}(\mathbf{y})&=&-\E\Big\{\Big(\frac{\partial \mathbf{r}(\mathbf{y})}{\partial \mathbf{y}}\Big)^{\dagger} \mathbf{C} \frac{\partial \mathbf{r}(\mathbf{y})}{\partial \mathbf{y}}\Big\} = -\frac{1}{\sigma^2}\E\Big\{\Big(\frac{\partial \mathcal{F}(T_{\mathbf{y}}\mathbf{x})}{\partial \mathbf{y}}\Big)^{\dagger} \frac{\partial \mathcal{F}(T_{\mathbf{y}}\mathbf{x})}{\partial \mathbf{y}}\Big\},
\end{eqnarray*}
where $\mathcal{F}$ is the Fourier transform. Starting from the following identity:
\begin{equation*}
\frac{\partial\mathcal{F}_i(T_{\mathbf{y}}\mathbf{x})}{\partial \mathbf{y}}= Ie^{{I\mathbf{k}_i}^T\mathbf{y}}\mathbf{k}_ic_{i}(\mathbf{x}),\;\;
\mathbf{c}(\mathbf{x})=\mathcal{F}(\mathbf{x}),
\end{equation*}
we have:
\begin{eqnarray} \label{Fisher}
\mathbf{F}&=&\frac{1}{\sigma^2}\E\Big\{\sum_{i=1}^N \mathbf{k}_i {\mathbf{k}_{i}}^T|c_{i}(\mathbf{x})|^2\Big\}
\frac{1}{\sigma^2}\sum_{i=1}^N \mathbf{k}_i {\mathbf{k}_{i}}^T\E(|c_{i}(\mathbf{x})|^2)\\ 
&=&\frac{1}{\sigma^2} \sum_{i=1}^N\frac{\mathbf{k}_i {\mathbf{k}_{i}}^T}{\nor{\mathbf{k}_{i}}_2^2}=\frac{1}{\sigma^2}\sum_{i=1}^N\mathbf{g}_i {\mathbf{g}_{i}}^T \nonumber 
\end{eqnarray}
where we used the fact that the averaged power spectrum $\E(|c_{i}(\mathbf{x})|^2)\approx \nor{\mathbf{k}_{i}}^{-2}_2$ and we define the unit norm vector $\mathbf{g}_i=\mathbf{k}_i/\nor{\mathbf{k}_i}$.\\
\noindent
The question we address in the next paragraph is: for which set of $\mathbf{k}_i$ is the (CRB) achieved? In other words, we are looking for the values of $\mathbf{k}_i$ for which the neuronal population is providing an estimate of the variable $\mathbf{y}$ with minimal variance.
In particular we consider the following minimization problem:
\begin{equation}\label{minim}
\argmin_{\{\mathbf{k}_i\}_{i=1}^{N}}\nor{\mathbf{F}^{-1}}_{Frob}^2.
\end{equation}

\subsection*{Optimal estimator and connection with frame theory}\label{app:sec2}
In this section we derive the proof of the main result of the paper.  
\begin{theorem}
Under the hypotheses $\mathbf{H}(1,2,3)$ the minimal variance position encoded by a set of $N$ neurons corrupted by Gaussian uncorrelated noise is achieved when $N=3$ and the set of frequency vectors is:
$$
f=\{\mathbf{k}_1,\mathbf{k}_2,\mathbf{k}_3\}=\{(cos(2\pi j/3),sin(2\pi j/3)),j=1,\cdots,3\}.
$$
or when $N=2$ for the set of frequencies forms an orthonormal frame.
\end{theorem}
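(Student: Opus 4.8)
The plan is to reduce the optimization \eqref{minim} to an elementary eigenvalue problem for a $2\times 2$ matrix and then to classify its minimizers. First I would record the structure of the Fisher matrix. By Eq.~\eqref{Fisher}, under $\mathbf{H}(1,2)$ together with the averaged power–spectrum estimate $\E(|c_i(\mathbf{x})|^2)\approx\nor{\mathbf{k}_i}_2^{-2}$, one has $\mathbf{F}=\sigma^{-2}\sum_{i=1}^N \mathbf{g}_i\mathbf{g}_i^{T}$ with $\mathbf{g}_i=\mathbf{k}_i/\nor{\mathbf{k}_i}$ unit vectors in $\R^2$; hence $\mathbf{F}$ is real, symmetric, positive semidefinite, and it is invertible (so that $\mathbf{H}3$ is meaningful) precisely when the $\mathbf{g}_i$ span $\R^2$, which requires $N\ge 2$ and at least two distinct orientations. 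Diagonalising $\mathbf{F}$ with eigenvalues $\lambda_1,\lambda_2>0$, taking the trace leaves only one constraint,
\begin{equation*}
\lambda_1+\lambda_2=\operatorname{tr}\mathbf{F}=\frac{1}{\sigma^2}\sum_{i=1}^{N}\nor{\mathbf{g}_i}_2^2=\frac{N}{\sigma^2},
\end{equation*}
since each $\mathbf{g}_i$ is a unit vector.

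Next I would dispatch the objective. Since $\nor{\mathbf{F}^{-1}}_{Frob}^2=\lambda_1^{-2}+\lambda_2^{-2}$, for a fixed number of neurons $N$ the problem \eqref{minim} is to minimise $\lambda_1^{-2}+\lambda_2^{-2}$ subject to $\lambda_1+\lambda_2=N/\sigma^2$, $\lambda_j>0$. The map $\lambda\mapsto \lambda^{-2}+(N/\sigma^2-\lambda)^{-2}$ is strictly convex and symmetric about its midpoint, so the unique minimiser is $\lambda_1=\lambda_2=N/(2\sigma^2)$, with value $8\sigma^4/N^2$. Equality forces $\mathbf{F}=\tfrac{N}{2\sigma^2}\mathbf{I}$, i.e. $\sum_{i=1}^N\mathbf{g}_i\mathbf{g}_i^{T}=\tfrac{N}{2}\mathbf{I}$: the frequency directions must form a \emph{unit–norm tight frame} of $\R^2$.

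Then I would classify these tight frames for the admissible $N$. Writing $\mathbf{g}_j=(\cos\theta_j,\sin\theta_j)$ and using $2\cos^2\theta=1+\cos 2\theta$, $2\sin\theta\cos\theta=\sin2\theta$, the tightness condition $\sum_j\mathbf{g}_j\mathbf{g}_j^T=\tfrac{N}{2}\mathbf{I}$ is equivalent to the single complex equation $\sum_{j=1}^N e^{2\im\theta_j}=0$. For $N=2$ this gives $e^{2\im\theta_1}=-e^{2\im\theta_2}$, i.e. $\mathbf{g}_1\perp\mathbf{g}_2$, the orthonormal frame. For $N=3$, three unit complex numbers summing to zero are necessarily a rotated copy of the cube roots of unity, so the $\theta_j$ are $60^\circ$ apart; equivalently the $\mathbf{k}_j$ are at mutual angles $120^\circ$, which is exactly $\{(\cos(2\pi j/3),\sin(2\pi j/3))\}_{j=1}^3$, the Mercedes–Benz frame. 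In both cases $\mathbf{g}_p^{T}\mathbf{g}_q$ is independent of the pair ($0$ and $-\tfrac12$ respectively), so the optimal configurations are precisely the equiangular ones — the Frame–Theory reading announced in the statement.

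I expect the genuinely delicate point to be the selection of $N$ itself, and I would make sure the write-up states it explicitly rather than leaving it implicit. The per–$N$ optimum $8\sigma^4/N^2$ is strictly decreasing, so the bare Frobenius objective has no interior optimum and would push $N\to\infty$; the restriction to $N\in\{2,3\}$ must therefore be imported from the modelling side. Concretely, the convention that one ``simple cell'' represents an entire sub-population sharing a preferred orientation $\mathbf{k}_i$ forces distinct units to carry distinct orientations, and — combined with the requirement that the optimal tight frame be equiangular — this caps $N$ at the maximal number of equiangular lines in $\R^2$, namely $\binom{2+1}{2}=3$. This leaves exactly the two candidates of the theorem, with $N=3$ strictly preferable to $N=2$ ($8\sigma^4/9<2\sigma^4$). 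The rest of the argument is mechanical; this cap on $N$ (via equiangularity, or equivalently via treating each orientation class as one effective unit) is the one non-routine ingredient.
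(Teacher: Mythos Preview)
Your proof is correct and follows essentially the same route as the paper's: reduce $\nor{\mathbf{F}^{-1}}_{Frob}^2$ to $\sum_i\lambda_i^{-2}$, argue the minimum occurs when the eigenvalues coincide so that $\mathbf{F}$ is a multiple of the identity (the tight-frame condition), and then classify the resulting unit-norm tight frames for $N=2,3$. Where you differ is in completeness rather than method: you make the trace constraint $\lambda_1+\lambda_2=N/\sigma^2$ explicit (the paper leaves it implicit in ``it is easy to prove''), you supply the classification directly via $\sum_j e^{2I\theta_j}=0$ where the paper cites \cite{Goyal2001}, and you flag honestly that the bare objective $8\sigma^4/N^2$ is monotone in $N$ so the restriction to $N\in\{2,3\}$ needs an extra modelling input (equiangularity / distinct orientation classes) --- a point the paper's own proof does not address.
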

%
\begin{proof}
Using the fact that $\mathbf{F}$ is semi-positive definite we can decompose it as $\mathbf{F} = \mathbf{V}^{T}\mathbf{\Lambda} \mathbf{V}$ where $\mathbf{V}$ is unitary and $\mathbf{\Lambda}$ is diagonal.
According to the Cramer-Rao bound the variance is bounded from below by the inverse of the Fisher Information.
Calculating the Frobenius norm of the Fisher matrix inverse we have:
\begin{equation}\label{tracen}
\nor{\mathbf{\mathbf{F}}^{-1}}^2 = Tr(\mathbf{V}^{T}(\mathbf{\Lambda}^{-1})^2 \mathbf{V})= Tr((\mathbf{\Lambda}^{-1})^2)= \sum_{i=1}^N\frac{1}{\lambda^2_i}
\end{equation}
where $\lambda_i$ are the eigenvalues of $\mathbf{F}$.\\
\noindent
It is easy to prove that the minimum of eq.\eqref{tracen} is reached when all the eigenvalues are equal i.e.
\begin{equation}\label{framecond}
\mathbf{F}=\lambda \bf{I}
\end{equation}
i.e. the set $\mathbf{k_i}$ form a tight frame.
When $N$ is equal to $2,3$, the only solutions are (see \cite{Goyal2001}, pg 210): for $N=2$, the orthogonal frame, for $N=3$ the so called Mercedes-Benz frame (or any rotated version of them):
\begin{equation*}
\mathbf{k}_j=\Big(\cos(\frac{2\pi j}{3}),\sin(\frac{2\pi j}{3})\Big)\;\;\;j=1,\cdots,3.
\end{equation*}
\end{proof}
\begin{remark}
In the three dimensional case a similar result still holds: a possible solution corresponds to the vectors associated to the vertices of a tetrahedron. More in general in dimension $d$ will be the vectors associated to the vertices of a Platonic Solid. However the uniqueness of the solution is guaranteed only in dimension two.
\end{remark}

\subsection*{Algorithmic formulation}

In this article we suppose a 2-phases learning process:
\begin{enumerate}
\item[(1)] Learning of the Fourier components by simple HD cells using Oja's synaptic updating rule;
\item[(2)] Learning of the selection of simple HD cells, performed by the complex cell that minimize, according to the Cramer-Rao bound, the norm of the inverse of the Fisher Information. 
\end{enumerate}
Solving phase $1$ simply correspond to the extraction of 
the principal components of neural input.
As for phase $2$ the minimization problem for the complex cell is:
\begin{equation}\label{algo2}
\argmin_{\{\mathbf{k_i}\}_{i=1}^N}\;\; \nor{F^{-1}}^2\;\;\;\textrm{with}\;\;N\;\;\;\textrm{minimal}.
\end{equation}
As we saw in Eq.~\eqref{Fisher}, the minimization of $\nor{F^{-1}}^2$ is achieved when the set $\{\mathbf{g}_i\}$ forms a tight frame. Tight frames are minima of the so called \textit{frame potential} (see \cite{casazza2006}), calculated as:
\begin{equation}\label{FP}
FP(\{\mathbf{g}_i\})= \sum_{ij=1}^{N}|\mathbf{g}^T_i\mathbf{g}_j|^2.
\end{equation}
Let us calculate the frame potential in our case.
Here we denote with $W$ the matrix whose columns are the vectors $\mathbf{w}_i$, simple cells receptive fields. Hence
the response matrix (i.e. the simple cells output) will be $A=X^TW$ where $X$ is the dataset corresponding to the initial stimuli. The complex grid cells will then aggregate some of the responses, i.e. they will calculate $AJ$ where $J$ is a vector of zeros and ones selecting which simple cells are meant to aggregate (we will have a zero whether the simple cell is not selected in the aggregation process and one elsewhere).
We can now use this notation to write the Fisher information as follows:
\begin{equation}\label{final}
F=-\Big(\frac{\partial \mathbf{r}(\mathbf{y})}{\partial \mathbf{y}}\Big)^{\dagger} \frac{\partial \mathbf{r}(\mathbf{y})}{\partial \mathbf{y}}=J^T\dot{A}^{T}\dot{A}J=J^TRJ.
\end{equation}
with the dot indicating the derivative and $R=\dot{A}^{T}\dot{A}$. In order to minimize the number of simple cells pooled by the complex cell, we add a sparsifying term in $\nor{J}_{0}$ or its relaxation $\nor{J}_{1}$.
Given the above reasoning our minimization problem boils down to:
\begin{equation}\label{algo_GD}
\argmin_{J}\;\; \nor{J^TRJ}^2 + \lambda \nor{J}_1
\end{equation}
To find the solution we adopt a gradient descent strategy with shrinkage: a calculation shows that the update rule for $J$ is:
\begin{equation}\label{gradJ}
J\rightarrow thr(J-\lambda\;RJJ^TRJ)
\end{equation}
where the $thr$ threshold is enforcing the sparsity constraint.

\pagebreak
\section*{Authors' contributions}
F.A conceptualized the problem. F.A. and B.F. developed, implemented, and tested the model. F.A , B.F and M.M.M. wrote the manuscript.

\section*{Acknowledgements}
M.M.M. is supported by grants from the Swiss National Science Foundation (grant 320030-169206), the Fondation Asile des aveugles, and a grantor advised by Carigest.
B.F. is supported by the Fondation Asile des aveugles. 
F. A. acknowledges the Center for Brains, Minds and
Machines (CBMM), funded by NSF STC award CCF-1231216 and the financial support of
the AFOSR projects FA9550-17-1-0390 and BAA-AFRL-AFOSR-2016-0007 (European Office of Aerospace
Research and Development), and the EU H2020-MSCA-RISE project NoMADS - DLV-777826.
FA is also supported by the Italian Institute of Technology.

\bibliography{biblio}

\bibliographystyle{ieeetr}

\setcounter{section}{0}
\setcounter{equation}{0}
\setcounter{theorem}{0}
\setcounter{proof}{0}

\end{document}